\begin{document}

\title[Construnctions of LOCC indistinguishable set]
{Construnctions of LOCC indistinguishable set of generalized Bell states}

\author{Jiang-Tao Yuan, Cai-Hong Wang, Ying-Hui Yang \& Shi-Jiao Geng}

\address{School of Mathematics and Information Science, Henan Polytechnic University, Jiaozuo, 454000, China}
\ead{jtyuan@hpu.edu.cn, chwang@hpu.edu.cn, yangyinghui4149@163.com}
\vspace{10pt}
\begin{indented}
\item[]May 2018
\end{indented}

\begin{abstract}
In this paper, we mainly consider the local indistinguishability of the set of  mutually orthogonal bipartite generalized
Bell states (GBSs). We construct small sets of GBSs with  cardinality smaller than $d$ which are not
distinguished by one-way local operations and classical communication (1-LOCC) in $d\otimes d$.
The constructions, based on linear system and Vandermonde matrix, is simple and effective.
The results give a unified upper bound for the minimum cardinality of 1-LOCC indistinguishable set of GBSs,
and greatly improve previous results in [Zhang \emph{et al.}, Phys. Rev. A 91, 012329 (2015); Wang \emph{et al.}, Quantum Inf. Process. 15, 1661 (2016)].
The case that $d$ is odd of  the results also shows that
the set of 4 GBSs in $5\otimes 5$ in [Fan, Phys. Rev. A 75, 014305 (2007)]
is indeed a 1-LOCC indistinguishable set which can not be distinguished by Fan's method.
\end{abstract}

\noindent{\it Keywords\/}: maximally entangled states, generalized Bell states, local indistinguishability

%\keywords{magnetic moment, solar neutrinos, astrophysics}
%
% Uncomment for keywords
%\vspace{2pc}
%\noindent{\it Keywords}: XXXXXX, YYYYYYYY, ZZZZZZZZZ
%
% Uncomment for Submitted to journal title message
%\submitto{\JPA}
%
% Uncomment if a separate title page is required
%\maketitle
%
% For two-column output uncomment the next line and choose [10pt] rather than [12pt] in the \documentclass declaration
%\ioptwocol
%

\section{Introduction}
In quantum mechanics, any set of orthogonal states can be discriminated.
In general, for bipartite systems,  local operations and classical communication (LOCC)
is not sufficient to distinguish among orthogonal states \cite{benn1999pra,walg2000prl,walg2002prl,gho2001prl}.
Any two orthogonal states can be perfectly distinguished by LOCC,
a complete orthogonal basis of maximally  entangled states (MESs) is locally indistinguishable,
deterministically or probabilistically \cite{horo2003prl,fan2004prl,fan2007pra}.
The nonlocal nature of quantum information is revealed when a set of orthogonal states
of a composite quantum system cannot be perfectly distinguished by LOCC.
This has been very useful in exploring quantum nonlocality and its
relationship with entanglement \cite{benn1999pra,walg2002prl,horo2003prl,benn1999prl}.

It is known that $d+1$ or more MESs in $d\otimes d$ are not perfectly locally distinguishable \cite{fan2004prl,fan2007pra,gho2004pra}.
Therefore, it is natural to ask whether a set of $N\leq d$ orthogonal MESs in $d\otimes d$
can be perfectly distinguished by LOCC for $d\geq4$  \cite{nath2005jmp}.
Firstly, Ghosh et al.  \cite{gho2004pra} showed examples of sets of $d$ generalized Bell states (GBSs) in
$d\otimes d$ for $d = 4,5$ that cannot be perfectly distinguished by one-way LOCC (1-LOCC)  \cite{nath2013pra}.
Fan \cite{fan2007pra} provided an example of $4$ GBSs in $5\otimes 5$ that potentially cannot be distinguished by LOCC
since Fan's method \cite{fan2004prl} does not work for the example.
Bandyopadhyay et al. \cite{band2011njp} gave examples of sets of $d-1$ GBSs in
$d\otimes d$ for $d =5,6$ that cannot be distinguished by 1-LOCC.
Yu et al. \cite{yu2012prl} constructed a set of 4 MESs in $4\otimes 4$ which is indistinguishable by positive partial transpose (PPT) operations.

Recently, Zhang et al. \cite{zhang2015pra} defined the function $f(d)$ which is the
minimum cardinality of 1-LOCC indistinguishable set of MESs in $d\otimes d$,
and proved that $f(d)\leq \lceil\frac{d+4}{2}\rceil$.
Wang \emph{et al.} \cite{wang2016qip} constructed a 1-LOCC indistinguishable set of $3\lceil \sqrt{d}\rceil-1$ GBSs,
proved that $3$ GBSs in $d\otimes d\ (d\geq 4)$ are always LOCC distinguishable \cite{wang2017qip},
and there exist a 1-LOCC indistinguishable set of $4$ MESs in $d\otimes d\ (d\geq 4)$ \cite{wang2016qip,nath2013pra}.

Obviously, if $d\geq 4$, then $f(d)\leq 4$.
Meanwhile Fan \cite{fan2004prl} showed that if $d$ is prime
then a set of $l$ GBSs satisfying $\frac{l(l-1)}{2}\leq d$ in $d\otimes d$ is LOCC distinguishable.
Hence the function $f(d)$ can not describe Fan's result.
It is natural to define the function $f_{GBS}(d)$ which is the
minimum cardinality of 1-LOCC indistinguishable set of GBSs in $d\otimes d$.
By the result in  \cite{zhang2015pra,wang2016qip,zhang2014qip}, $f_{GBS}(d)\leq \min\{\lceil\frac{d+4}{2}\rceil,3\lceil \sqrt{d}\rceil-1\}$.
It is known that  $f_{GBS}(2)=3$, $f_{GBS}(3)=4$, $f_{GBS}(4)=4$ and $f_{GBS}(5)=4$
\cite{walg2000prl,nath2005jmp,gho2004pra,band2011njp,wang2017qip,tian2016pra,sing2017pra}.
In general, for $d\geq 5$, $f_{GBS}(d)\leq \min\{\lceil\frac{d+4}{2}\rceil,3\lceil \sqrt{d}\rceil-1\}$
which often is loose and not a exact value of $f_{GBS}(d)$.
For example, $f_{GBS}(5)=4$ \cite{fan2004prl,band2011njp,wang2017qip}  and $\min\{\lceil\frac{5+4}{2}\rceil,3\lceil \sqrt{5}\rceil-1\}=5$.
Till now, though there is a set of 8 mutually unbiased bases (MUB) in the space $\mathbb{C}^{7}$ \cite{fan2004prl,band2002alg},
the exact value of $f_{GBS}(7)$ is unknown.

In this paper, we focus on constructing the general 1-LOCC indistinguishable set of GBSs in $d\otimes d$.
When $d$ is odd, we show that there exist 1-LOCC indistinguishable sets of $m_{odd}$ GBSs
where $m_{odd}$ is not more than
\begin{eqnarray}
\min\{\frac{d+3}{2},\lfloor\frac{d+1}{4}\rfloor+5,2\lceil \sqrt{d}\rceil+\lceil\frac{\lceil \frac{d-1}{4}\rceil}{\lceil \sqrt{d}\rceil}\rceil\},
\end{eqnarray}
so $f_{GBS}(d)\leq m_{odd}.$
In particular,  $f_{GBS}(7)\leq  \frac{7+3}{2}=5$ and this together with Fan's result \cite{fan2004prl} imply $f_{GBS}(7)=5$.
When $d$ is even, we construct 1-LOCC indistinguishable sets of $m_{even}$ GBSs
where $m_{even}$ is not more than
\begin{eqnarray}
\min\{\lceil\frac{d}{4}\rceil+3,2\lceil \sqrt{\frac{d+2}{2}}\rceil+\lceil\frac{\lceil \frac{d+1}{4}\rceil}{\lceil \sqrt{\frac{d+2}{2}}\rceil}\rceil\},
\end{eqnarray}
so $f_{GBS}(d)\leq m_{even}$, $f_{GBS}(4)\leq 4$ \cite{wang2017qip,tian2016pra,sing2017pra}, $f_{GBS}(6)\leq 5$ and $f_{GBS}(8)\leq 5$.
For 1-LOCC indistinguishability of the states,
a simple and effective  method is presented which
is based on linear system and Vandermonde matrix.
Our results imply that  the set of 4 GBSs in $5\otimes 5$ in \cite{fan2007pra} is really a 1-LOCC indistinguishable set,
this is an interesting example since Fan's method \cite{fan2004prl} does not work for this case.

The rest of this paper is organized as follows.
In Sec. II, we recall some relevant notions and results.
In Sec. III, the local indistinguishability of GBSs is discussed for the case that $d$ is odd.
In Sec. IV, the local indistinguishability of GBSs is considered for the case that $d$ is even.
At the end, in Sec. IV, we draw the conclusion.

\section{Preliminaries}
\newtheorem{definition}{\indent Definition}
\newtheorem{lemma}{\indent Lemma}
\newtheorem{theorem}{\indent Theorem}
\newtheorem{corollary}{\indent Corollary}

\def\QEDclosed{\mbox{\rule[0pt]{1.3ex}{1.3ex}}}
\def\QED{\QEDclosed}
\def\proof{\indent{\em Proof}.}
\def\endproof{\hspace*{\fill}~\QED\par\endtrivlist\unskip}

Consider a Hilbert space with dimension $d$, $\{|j\rangle\}_{j=0}^{d-1}$ is the computational basis.
Let $U_{m,n}=X^{m}Z^{n}, m, n=0,1,\ldots,d-1$ be generalized Pauli matrices
constituting a basis of unitary operators,
and $X|j\rangle=|j+1$ mod $d\rangle$, $Z|j\rangle=w^{j}|j\rangle$, $w=e^{2\pi i/d}$.

In a quantum system $\mathcal {H}_{A}\otimes\mathcal {H}_{B}$ of dimension $d\otimes d$,
the canonical maximally entangled state $|\Phi\rangle$ in $d\otimes d$  is $|\Phi_{00}\rangle=(1/\sqrt{d})\sum_{j=0}^{d-1}|jj\rangle$.
We know that $(I\otimes U)|\Phi\rangle=(U^{T}\otimes I)|\Phi\rangle$, where $T$ means matrix transposition.
Any maximally entangled state can be written as $|\Psi\rangle=(I\otimes U)|\Phi\rangle$ where $U$ is unitary.
If $U=X^{m}Z^{n}$, the states
\begin{eqnarray}
|\Phi_{m,n}\rangle=(I\otimes U_{m,n})|\Phi\rangle
\end{eqnarray}
are called generalized Bell states.
For simplicity, denote $U_{m,n}=X^{m}Z^{n}\doteq(m,n)$,
and $\{(I\otimes U_{m,n})|\Phi\rangle\}\doteq\{U_{m,n}\}\doteq\{X^{m}Z^{n}\}\doteq\{(m,n)\}$.

For a set of GBSs $S^{d}=\{(m_{j},n_{j})\}_{j=1}^{l}$, the corresponding pairwise difference set $\Delta U$ means
\begin{eqnarray*}
\Delta U=\{(m_{jk}, n_{jk})|m_{jk}=m_{j}-m_{k}, n_{jk}=n_{j}-n_{k}, j\neq k\}.
\end{eqnarray*}

\begin{lemma}[\cite{zhang2014qip}]\label{zhang2014lem}
In $d\otimes d$, a set $\{|\Phi_{m_{j}n_{j}}\rangle\}_{j=1}^{l}$ of $l$ GBSs ($l\leq d$)
can be perfectly distinguished by 1-LOCC if and only if
there exists at least one state $|\alpha\rangle$ for which
the set $\{U_{m_{j}n_{j}}|\alpha\rangle\}_{j=1}^{l}$ are pairwise orthogonal.
\end{lemma}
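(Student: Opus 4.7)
The plan is to prove the ``if and only if'' by treating the two directions separately, leveraging the identity $(I\otimes U)|\Phi\rangle=(U^{T}\otimes I)|\Phi\rangle$ on one side and the Weyl covariance of the generalized Pauli group on the other.

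For the ``only if'' direction, I would assume the set is perfectly 1-LOCC distinguishable and, without loss of generality, take Alice's measurement to be a rank-one POVM $\{p_{k}|\beta_{k}\rangle\langle\beta_{k}|\}$ (any POVM can be refined this way). A short computation using $(\langle\beta_{k}|\otimes I)|\Phi_{m_{j}n_{j}}\rangle=\frac{1}{\sqrt{d}}\,U_{m_{j}n_{j}}|\beta_{k}^{*}\rangle$ shows that Bob's unnormalized conditional state is proportional to $U_{m_{j}n_{j}}|\beta_{k}^{*}\rangle$. Since every $|\Phi_{m_{j}n_{j}}\rangle$ is maximally entangled, Alice's outcome probabilities are independent of $j$, so every outcome $k$ with $p_{k}>0$ occurs for every state in the set; hence perfect discrimination by Bob forces $\{U_{m_{j}n_{j}}|\beta_{k}^{*}\rangle\}_{j=1}^{l}$ to be pairwise orthogonal. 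Setting $|\alpha\rangle=|\beta_{k}^{*}\rangle$ for any such $k$ supplies the required state.

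For the ``if'' direction, I would start from the existence of $|\alpha\rangle$ with $\{U_{m_{j}n_{j}}|\alpha\rangle\}_{j=1}^{l}$ pairwise orthogonal and exploit two structural facts. First, Weyl covariance: using $ZX=\omega XZ$, every conjugation $U_{p,q}^{\dagger}U_{m_{i}n_{i}}^{\dagger}U_{m_{j}n_{j}}U_{p,q}$ equals $U_{m_{i}n_{i}}^{\dagger}U_{m_{j}n_{j}}$ up to a scalar phase, so the orthogonality property transports to every translate: $\{U_{m_{j}n_{j}}U_{p,q}|\alpha\rangle\}_{j=1}^{l}$ is pairwise orthogonal for every $(p,q)$. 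Second, the unitary $1$-design identity $\frac{1}{d}\sum_{p,q}U_{p,q}|\alpha^{*}\rangle\langle\alpha^{*}|U_{p,q}^{\dagger}=I$. Together these let Alice perform the POVM $\{M_{p,q}\}$ with $M_{p,q}=\frac{1}{d}U_{p,q}|\alpha^{*}\rangle\langle\alpha^{*}|U_{p,q}^{\dagger}$ and broadcast the outcome $(p,q)$; Bob's conditional state is proportional to $U_{m_{j}n_{j}}U_{p,q}^{*}|\alpha\rangle$, which is a pairwise orthogonal family across $j$, so Bob completes the discrimination by projecting onto an orthonormal basis that extends these $l\le d$ vectors.

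The main obstacle is precisely this ``if'' step: existence of a single witness $|\alpha\rangle$ naively only guarantees success for one specific POVM outcome, which is far too weak for deterministic 1-LOCC. The covariance of the GBSs under the Weyl--Heisenberg group is what upgrades a single witness to an entire POVM of witnesses, and I expect the cleanest write-up to isolate that covariance as a separate small claim before invoking the $1$-design resolution of identity.
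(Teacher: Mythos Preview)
The paper does not prove this lemma; it is quoted as a known criterion from Zhang \emph{et al.}~\cite{zhang2014qip} and used as a black box throughout, so there is no in-paper proof to compare against.

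Your proposal is nonetheless a correct and standard proof of the statement. The ``only if'' direction is exactly the usual reduction: refine Alice's POVM to rank-one effects, use $(\langle\beta_{k}|\otimes I)|\Phi_{m_{j}n_{j}}\rangle=\frac{1}{\sqrt{d}}U_{m_{j}n_{j}}|\beta_{k}^{*}\rangle$, and observe that maximal entanglement makes Alice's outcome probabilities independent of $j$, forcing Bob's conditional states to be orthogonal. The ``if'' direction is also right: Weyl covariance (conjugation by $U_{p,q}$ multiplies $U_{m_{i}n_{i}}^{\dagger}U_{m_{j}n_{j}}$ by a phase) propagates the single witness $|\alpha\rangle$ to every translate $U_{p,q}|\alpha\rangle$, and the $1$-design identity $\frac{1}{d}\sum_{p,q}U_{p,q}|\alpha^{*}\rangle\langle\alpha^{*}|U_{p,q}^{\dagger}=I$ assembles these into a valid POVM for Alice. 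One minor point worth stating explicitly: 1-LOCC in principle allows either party to measure first, so you should note that the identity $(I\otimes U)|\Phi\rangle=(U^{T}\otimes I)|\Phi\rangle$ lets you assume without loss of generality that Alice moves first; the Bob-first case is symmetric.
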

\begin{lemma}\label{lem2.1}
Let $W_{d\times d}=[w^{ij}]_{i,j=0}^{d-1}=[|w_{j}\rangle]_{j=0}^{d-1}$ be a Vandermonde matrix
where $|w_{j}\rangle$ is a column vector.
If  $k\leq d,\ 0\leq i_{1}\leq d-1,\ 0\leq j_{1}<\cdots<j_{k}\leq d-1$,
then each submarix
$$W\left(
\begin{array}{llll}
i_{1}&i_{1}+1&\cdots&i_{1}+k-1\\
j_{1}&j_{2}&\cdots&j_{k}
\end{array}
\right)=
\left[
\begin{array}{llll}
 w^{i_{1}j_{1}}&w^{i_{1}j_{2}} & \cdots & w^{i_{1}j_{k}}\\
w^{(i_{1}+1)j_{1}}&w^{(i_{1}+1)j_{2}} & \cdots & w^{(i_{1}+1)j_{k}}\\
\cdots& \cdots & \cdots & \cdots\\
w^{(i_{1}+k-1)j_{1}}&w^{(i_{1}+k-1)j_{2}} & \cdots & w^{(i_{1}+k-1)j_{k}}\\
\end{array}
\right]$$ of $W_{d\times d}$ is invertible.
\end{lemma}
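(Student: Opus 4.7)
The plan is to reduce the determinant of the displayed submatrix to a standard Vandermonde determinant and then read off nonvanishing from the fact that $w$ is a primitive $d$-th root of unity.

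First I would factor out a common monomial from each column. In column $t$ (for $t=1,\dots,k$), every entry has the form $w^{(i_{1}+s)j_{t}}$ with $s=0,1,\dots,k-1$. Pulling $w^{i_{1}j_{t}}$ out of column $t$ yields
\begin{equation*}
\det W\!\left(\begin{array}{llll} i_{1}&i_{1}+1&\cdots&i_{1}+k-1\\ j_{1}&j_{2}&\cdots&j_{k}\end{array}\right)
=\Bigl(\prod_{t=1}^{k}w^{i_{1}j_{t}}\Bigr)\,\det V,
\end{equation*}
where
\begin{equation*}
V=\bigl[\,w^{s\,j_{t}}\,\bigr]_{\,s=0,\dots,k-1;\;t=1,\dots,k}.
\end{equation*}
The scalar prefactor $\prod_{t}w^{i_{1}j_{t}}$ is a root of unity and hence nonzero, so invertibility of the original submatrix is equivalent to invertibility of $V$.

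Next I would recognize $V$ as a genuine Vandermonde matrix in the variables $x_{t}=w^{j_{t}}$, so that
\begin{equation*}
\det V=\prod_{1\le a<b\le k}\bigl(w^{j_{b}}-w^{j_{a}}\bigr).
\end{equation*}
The remaining task is to show each factor is nonzero. Since $0\le j_{1}<j_{2}<\cdots<j_{k}\le d-1$, the exponents are pairwise distinct modulo $d$, and $w=e^{2\pi i/d}$ is a primitive $d$-th root of unity, so $w^{j_{a}}\ne w^{j_{b}}$ whenever $a\ne b$. Therefore $\det V\ne 0$, and consequently the displayed submatrix is invertible.

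There is essentially no obstacle: the argument is a one-line reduction to the classical Vandermonde identity. The hypothesis $k\le d$ enters only to guarantee that $k$ distinct column indices in $\{0,1,\dots,d-1\}$ exist; the constraint $i_{1}+k-1\le d-1$ is not needed because $w^{d}=1$ makes row indices well-defined modulo $d$ without affecting the factorization above.
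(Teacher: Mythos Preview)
Your argument is correct and is essentially identical to the paper's own proof: the paper also factors out $w^{i_{1}(j_{1}+\cdots+j_{k})}$ from the determinant and reduces to the Vandermonde product $\prod_{l<m}(w^{j_{m}}-w^{j_{l}})\neq 0$. Your exposition is slightly more detailed (and you correctly note that the bound on $i_{1}+k-1$ is not really needed), but the method is the same.
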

\begin{proof}
If $k=d$, since $w=e^{2\pi i/d}$, $\det(W_{d\times d})=\prod_{0\leq i<j\leq d-1}(w^{j}-w^{i})\neq0$.
If $k<d$, we have
$\det(W\left(
\begin{array}{llll}
i_{1}&\cdots&i_{1}+k-1\\
j_{1}&\cdots&j_{k}\end{array}
\right))
=w^{i_{1}(j_{1}+\cdots+j_{k})}\prod_{1\leq l<m\leq k}(w^{j_{m}}-w^{j_{l}})\neq0$.
\end{proof}

\section{Constructions of 1-LOCC indistinguishable set when $d$ is odd}
In this section,assume that  $d$ is odd and $d\geq 5$ since the case $d< 5$ is known.
We present our families of examples with different methods.
The first family is inspired by Fan's example that there is a potentially LOCC indistinguishable set of 4 GBSs in $5\otimes 5$ \cite{fan2007pra}.
The second family improves the results in Zhang et al. \cite{zhang2015pra}.
The last family is a generalization of the results in Wang et al. \cite{wang2016qip}.

\begin{lemma}\label{lem3.1}
Let $S^{d}=\{(m_{j},n_{j})\}_{j=1}^{l}$ be a set of GBSs, $I_{1}\doteq\{(1,i)\}_{i=0}^{d-1}$.
If there exists $i_{0}$ such that $0< i_{0}< d-1$ and $\Delta U\supseteq I_{1}\cup I_{0}$
where $I_{0}=\{(0,i)\}_{i=i_{0}}^{i_{0}+\lfloor\frac{d}{2}\rfloor-1}$,
then $S^{d}$ is 1-LOCC indistinguishable.
In particular, the assertion holds if the set $I_{0}$ is replaced with
$\{(0,i)\}_{i=\lfloor\frac{d}{2}\rfloor-\lceil\frac{d-1}{4}\rceil+1}^{\lfloor\frac{d}{2}\rfloor}$.
\end{lemma}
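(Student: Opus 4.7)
The plan is to invoke Lemma \ref{zhang2014lem} and argue by contradiction. Suppose $S^d$ is 1-LOCC distinguishable; then there is a unit vector $|\alpha\rangle = \sum_{j=0}^{d-1} \alpha_j |j\rangle$ for which $\langle \alpha | X^m Z^n | \alpha \rangle = 0$ for every $(m, n) \in \Delta U$. The two blocks $I_1$ and $I_0$ contribute two different kinds of constraints, and I would exploit them in sequence: $I_1$ constrains the support of $|\alpha\rangle$, and then $I_0$ forces the surviving amplitudes to vanish, contradicting $\|\,|\alpha\rangle\|=1$.

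For the first step, from $I_1 \subseteq \Delta U$ one computes $\langle\alpha|XZ^i|\alpha\rangle = \sum_{j=0}^{d-1} w^{ij}\,\alpha_j\,\overline{\alpha_{j+1}}$ (indices mod $d$), which must vanish for $i=0,1,\ldots,d-1$. This is a full Vandermonde system in the variables $\beta_j := \alpha_j\overline{\alpha_{j+1}}$, so Lemma \ref{lem2.1} applied with $k=d$ gives $\beta_j=0$ for every $j$. Hence no two cyclically adjacent coordinates of $|\alpha\rangle$ can be simultaneously nonzero, and the support $J = \{j : \alpha_j \neq 0\}$ therefore satisfies $|J| \leq \lfloor d/2 \rfloor$.

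For the second step, from $I_0 \subseteq \Delta U$ we get $\langle\alpha|Z^i|\alpha\rangle = \sum_{j\in J}\gamma_j w^{ij} = 0$ with $\gamma_j := |\alpha_j|^2 > 0$, for each $i = i_0, i_0+1, \ldots, i_0+\lfloor d/2\rfloor - 1$. This is a linear system with $\lfloor d/2 \rfloor$ consecutive-row equations in at most $|J| \leq \lfloor d/2\rfloor$ unknowns. Restricting to any $|J|$ consecutive rows produces a submatrix of precisely the shape treated in Lemma \ref{lem2.1}, hence invertible, forcing $\gamma_j = 0$ for every $j \in J$. That contradicts $|\alpha\rangle$ being a unit vector and completes the main part.

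For the \emph{in particular} clause, the smaller candidate $I_0 = \{(0,i)\}_{i=\lfloor d/2\rfloor-\lceil(d-1)/4\rceil+1}^{\lfloor d/2\rfloor}$ has only $\lceil (d-1)/4 \rceil$ elements, so it is not immediately covered. The key observation is that $\gamma_j$ is real, so taking complex conjugates of $\sum_{j\in J}\gamma_j w^{ij} = 0$ yields $\sum_{j\in J}\gamma_j w^{(d-i)j} = 0$ for free. Applying this to each $i \in I_0$ adjoins the reflected block $\{d-i : i \in I_0\}$, and a direct check splitting into $d = 4m+1$ and $d = 4m+3$ shows that $I_0$ together with its reflection forms a run of at least $\lfloor d/2\rfloor$ consecutive residues mod $d$; the second step then applies verbatim. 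I expect the only genuine obstacle to be this arithmetic check of the reflected indices—once the consecutive block is in hand, Lemma \ref{lem2.1} finishes both cases uniformly.
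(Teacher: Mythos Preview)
Your proposal is correct and follows essentially the same route as the paper's proof: use $I_1$ and the full Vandermonde system to force $\alpha_j\overline{\alpha_{j+1}}=0$ (hence at most $\lfloor d/2\rfloor$ nonzero coordinates), then use the $\lfloor d/2\rfloor$ consecutive-row constraints from $I_0$ together with Lemma~\ref{lem2.1} to kill the remaining $|\alpha_j|^2$; for the ``in particular'' part, both you and the paper exploit the reality of $|\alpha_j|^2$ to reflect $i\mapsto d-i$ and obtain a consecutive block of length $2\lceil(d-1)/4\rceil\geq\lfloor d/2\rfloor$. The only cosmetic difference is that the paper phrases the second step via Fourier coefficients $b_i$ of $(|\alpha_j|^2)_j$ before passing to the submatrix, whereas you go to the linear system directly.
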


\begin{proof}
Suppose that $S^{d}$ can be distinguished by 1-LOCC,
then by Lemma \ref{zhang2014lem}, there exists a normalized vector $|\alpha\rangle=\sum_{j=0}^{d-1}\alpha_{j}|j\rangle$
such that $\{U_{m_{j}n_{j}}|\alpha\rangle:\{(m_{j},n_{j})\}\subset S^{d}\}$ is a orthonormal set.
It means that $\langle\alpha|U_{m_{j}n_{j}}^{\dag}U_{m_{k}n_{k}}|\alpha\rangle=0$, $j\neq k$.
So $\langle\alpha|U_{m_{kj}n_{kj}}|\alpha\rangle=0$.
Hence $\Delta U\supseteq I_{1}$ implies
$\langle\alpha|U_{1i}|\alpha\rangle=0$,
i.e., $\langle(\alpha_{j}^{*}\alpha_{j+1})_{j=0}^{d-1}|w_{i}\rangle=0$
where $(\alpha_{j}^{*}\alpha_{j+1})_{j=0}^{d-1}=(\alpha_{0}^{*}\alpha_{1},\alpha_{1}^{*}\alpha_{2},\ldots,\alpha_{d-1}^{*}\alpha_{0})$.
Therefore $(\alpha_{j}^{*}\alpha_{j+1})_{j=0}^{d-1}$ is a zero vector since $\{|w_{i}\rangle\}_{i=0}^{d-1}$ is a base.
This ensures that there is at least $\lceil\frac{d}{2}\rceil$ $\alpha_{j}=0$,
we can assume
\begin{eqnarray}\label{eq1lem3.1}
\alpha_{c_{1}}\neq0,\cdots,\alpha_{c_{n}\neq0}, n\leq \lfloor\frac{d}{2}\rfloor.
\end{eqnarray}

On the other hand, denote $b_{i}\doteq \langle w_{i}|(|\alpha_{j}|^{2})_{j=0}^{d-1}\rangle$
the Fourier coefficient of the vector $(|\alpha_{j}|^{2})_{j=0}^{d-1}$ with respect to the base $\{|w_{i}\rangle\}_{i=0}^{d-1}$.
Then $b_{0}=1$ ($|\alpha\rangle$ is a normalized vector) and
$|(|\alpha_{j}|^{2})_{j=0}^{d-1}\rangle=W_{d\times d}|(b_{j})_{j=0}^{d-1}\rangle$,
i.e.,
\begin{eqnarray}\label{eq2lem3.1}
W^{\ast}_{d\times d}|(|\alpha_{j}|^{2})_{j=0}^{d-1}\rangle=|(b_{j})_{j=0}^{d-1}\rangle.
\end{eqnarray}

Similarly, $\Delta U\supseteq I_{0}$ implies
$\langle\alpha|U_{0i}|\alpha\rangle=0$,
i.e.,
\begin{eqnarray}\label{eq3lem3.1}
\langle(|\alpha_{j}|^{2})_{j=0}^{d-1}|w_{i}\rangle=0=b_{i},\
i=i_{0},\cdots,i_{0}+\lfloor\frac{d}{2}\rfloor-1.
\end{eqnarray}
It follows from (\ref{eq1lem3.1})-(\ref{eq3lem3.1}) that
\begin{eqnarray}\label{eq4lem3.1}
W^{\ast}\left(
\begin{array}{llll}
i_{0}&\cdots&i_{0}+\lfloor\frac{d}{2}\rfloor-1\\
c_{1}&\cdots&c_{n}
\end{array}
\right)
|(|\alpha_{c_{j}}|^{2})_{j=1}^{n}\rangle=|0\rangle.
\end{eqnarray}
This means the linear system $W^{\ast}\left(
\begin{array}{llll}
i_{0}&\cdots&i_{0}+\lfloor\frac{d}{2}\rfloor-1\\
c_{1}&\cdots&c_{n}
\end{array}
\right)X=|0\rangle$ has a nonzero solution $|(|\alpha_{c_{j}}|^{2})_{j=1}^{n}\rangle$.
By Lemma \ref{lem2.1}, rank$(W^{\ast}\left(
\begin{array}{llll}
i_{0}&\cdots&i_{0}+\lfloor\frac{d}{2}\rfloor-1\\
c_{1}&\cdots&c_{n}
\end{array}
\right))=n\leq \lfloor\frac{d}{2}\rfloor$.
This is a contradiction.

In particular, if the set $I_{0}$ is replaced with
$\{(0,i)\}_{i=\lfloor\frac{d}{2}\rfloor-\lceil\frac{d-1}{4}\rceil+1}^{\lfloor\frac{d}{2}\rfloor}$.
Then
$\langle\alpha|U_{0i}|\alpha\rangle=0$
which is equivalent to
$\langle\alpha|U_{0,d-i}|\alpha\rangle=0$,
i.e.,
\begin{eqnarray}\label{eq5lem3.1}\nonumber
\langle(|\alpha_{j}|^{2})_{j=0}^{d-1}|w_{i}\rangle=0=\langle(|\alpha_{j}|^{2})_{j=0}^{d-1}|w_{d-i}\rangle.
\end{eqnarray}

So $\Delta U\supseteq\{(0,i)\}_{i=\lfloor\frac{d}{2}\rfloor-\lceil\frac{d-1}{4}\rceil+1}^{\lceil\frac{d}{2}\rceil+\lceil\frac{d-1}{4}\rceil-1}$
and the assertion follows by $2\lceil\frac{d-1}{4}\rceil\geq \lfloor\frac{d}{2}\rfloor=|I_{0}|$.
\end{proof}

\subsection{LOCC indistinguishable set of $\frac{d+3}{2}$ GBSs}

At the end of \cite{fan2007pra}, Fan gave a set of GBSs
$S^{5}=\{(0,0),(2,0),(1,1),(1,3)\}$ with particular interest
because the set can not be LOCC distinguished by Fan's method in \cite{fan2004prl}
and thus it potentially can not be LOCC distinguished.
In this subsection, it is shown that there exists a LOCC indistinguishable set of $\frac{d+3}{2}$ GBSs
which implies that Fan's set of 4 GBSs in $5\otimes 5$ is indeed a 1-LOCC indistinguishable set.

\begin{theorem}\label{th3.1}
Let $S^{d}=\{(0,0),(2,0),(1,2i-1),i=1,\cdots,\frac{d-1}{2}\}$ be a set of GBSs,
then $S^{d}$ is 1-LOCC indistinguishable and $f_{GBS}(d)\leq \frac{d+3}{2}.$
Especially, $f_{GBS}(5)=4$, $f_{GBS}(7)=5$.
\end{theorem}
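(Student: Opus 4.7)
The plan is to invoke Lemma~\ref{zhang2014lem} and assume for contradiction that some normalized vector $|\alpha\rangle = \sum_{j=0}^{d-1} \alpha_j |j\rangle$ satisfies $\langle\alpha|U_{m,n}|\alpha\rangle = 0$ for every $(m,n) \in \Delta U$. First I would compute $\Delta U$ explicitly. The pair $\{(0,0),(2,0)\}$ contributes $(\pm 2,0)$; combining $(1,2i-1)$ against $(0,0)$ and $(2,0)$ for $i=1,\ldots,(d-1)/2$ contributes $(1,n)$ for every $n\in\{1,2,\ldots,d-1\}$ (together with their inverses, split according to the parity of $n$); and the pairs within $\{(1,2i-1)\}$ contribute $(0,n)$ for every $n\in\{2,3,\ldots,d-2\}$. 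A key observation is that $(1,0)\notin\Delta U$, so Lemma~\ref{lem3.1} cannot be invoked as a black box.

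Because $(1,0)$ is missing, I would redo the Fourier step of Lemma~\ref{lem3.1} with care. The equations $\langle\alpha|U_{1,j}|\alpha\rangle = 0$ for $j=1,\ldots,d-1$ force the sequence $\tilde\gamma_m := \alpha_{m+1}^{\ast}\alpha_m$ to have vanishing Fourier coefficients at every nonzero frequency, so $\tilde\gamma_m \equiv c$ for some constant $c$. The proof then splits according to whether $c = 0$.

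If $c = 0$, the argument parallels Lemma~\ref{lem3.1}: no two cyclically adjacent $\alpha_m$ can both be nonzero, so at most $(d-1)/2$ indices $c_1<\cdots<c_n$ satisfy $\alpha_{c_j}\neq 0$. The equations $\langle\alpha|Z^i|\alpha\rangle = 0$ for $i=2,\ldots,d-2$ then supply $d-3\ge n$ consecutive rows, and Lemma~\ref{lem2.1} gives an invertible $n\times n$ Vandermonde submatrix, forcing $|\alpha_{c_j}|^2 = 0$ for all $j$ and contradicting normalization. If $c\neq 0$, then every $\alpha_m$ is nonzero and $|\alpha_{m+1}||\alpha_m| = |c|$ gives $|\alpha_{m+2}| = |\alpha_m|$; since $d$ is odd, the single cyclic orbit forces $|\alpha_m|^2 = 1/d$ uniformly. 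I would then invoke the unused difference $(2,0)\in\Delta U$: using $\alpha_{m+2} = \bar c\,\alpha_m/c$, a direct computation yields $\langle\alpha|X^2|\alpha\rangle = c/\bar c \neq 0$, again a contradiction. I expect Case~2 to be the main obstacle, since this is exactly where the Lemma~\ref{lem3.1} framework fails, and where both the extra difference $(2,0)$ and the parity of $d$ become essential.

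Finally, $|S^d| = (d-1)/2 + 2 = (d+3)/2$ gives $f_{GBS}(d) \le (d+3)/2$. For $f_{GBS}(5)=4$ and $f_{GBS}(7)=5$, I would combine this upper bound with Fan's lower bound from~\cite{fan2004prl}: since $5$ and $7$ are prime, any $l$ GBSs with $l(l-1)/2\le d$ are LOCC distinguishable, excluding $l\le 3$ for $d=5$ and $l\le 4$ for $d=7$, so the upper bounds $4$ and $5$ are tight.
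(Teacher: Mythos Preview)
Your proposal is correct and follows essentially the same route as the paper: compute $\Delta U$, note that $(1,0)$ is missing, use the constraints $\langle\alpha|U_{1,j}|\alpha\rangle=0$ for $j\neq 0$ to force $\alpha_{m+1}^*\alpha_m\equiv c$, and then split into the cases $c=0$ (reduce to the Vandermonde argument of Lemma~\ref{lem3.1}) and $c\neq 0$ (use the difference $(2,0)$ to derive a nonzero value of $\langle\alpha|X^2|\alpha\rangle$). The only cosmetic differences are that the paper writes the $c\neq 0$ contradiction as $\sum_j \alpha_j^*\alpha_{j+2}=\lambda^2\sum_j|\alpha_{j+1}|^{-2}\neq 0$ rather than $c/\bar c$, and in the $c=0$ case it invokes Lemma~\ref{lem3.1} directly (treating the deduced $(1,0)$-condition as if $(1,0)\in\Delta U$) instead of rerunning the Vandermonde step; your detour through $|\alpha_m|^2=1/d$ is unnecessary but harmless.
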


When $d$ is odd, the cardinality of $S^{d}$ ($|S^{d}|=\frac{d+3}{2}$) in Theorem \ref{th3.1} is smaller than
the result in Zhang et al. \cite{zhang2015pra} ($|S^{d}|=\frac{d+5}{2}$).

\begin{proof}
Suppose that $S^{d}$ can be distinguished by 1-LOCC,
according to Lemma \ref{zhang2014lem}, there exists a normalized vector $|\alpha\rangle=\sum_{j=0}^{d-1}\alpha_{j}|j\rangle$
such that $\{U_{m_{j}n_{j}}|\alpha\rangle:\{(m_{j},n_{j})\}\subset S^{d}\}$ is a orthonormal set.
It means that $\langle\alpha|U_{m_{kj}n_{kj}}|\alpha\rangle=0$, $j\neq k$.
By assumption, $\Delta U\supseteq\{(0,i)\}_{i=2}^{d-2}\cup\{(1,i)\}_{i=1}^{d-1}\cup\{(2,0)\}$.
We will show that $\Delta U\supseteq\{(1,0)\}$.
In fact, by $\Delta U\supseteq\{(1,i)\}_{i=1}^{d-1}$, i.e.,
$\langle\alpha|U_{1i}|\alpha\rangle=0$,
we have $|(\alpha_{j}^{*}\alpha_{j+1})_{j=0}^{d-1}\rangle=\lambda |w_{0}\rangle=\lambda (1,\cdots,1)^{T}$.
If $\lambda\neq 0$, then $\alpha_{j}\neq 0$,
$\alpha_{j}^{*}\alpha_{j+2}=\frac{\alpha_{j}^{*}\alpha_{j+1}\alpha_{j+1}^{*}\alpha_{j+2}}{|\alpha_{j+1}|^{2}}
=\frac{\lambda^{2}}{|\alpha_{j+1}|^{2}}$.
So $|(\alpha_{j}^{*}\alpha_{j+2})_{j=0}^{d-1}\rangle=\lambda^{2}|(|\alpha_{j+1}|^{2})_{j=0}^{d-1}\rangle$
and it contradicts with $\Delta U\supseteq\{(2,0)\}$.
Therefore $\lambda= 0$, $\Delta U\supseteq\{(1,0)\}$.
By Lemma \ref{lem3.1}, $S^{d}$ is indistinguished by 1-LOCC and this is a contradiction.
\end{proof}

\subsection{LOCC indistinguishable set of $5+\lfloor\frac{d+1}{4}\rfloor$ GBSs}
In this subsection, assume that  $d\geq 9$ since the case $d\leq 7$ is known.

\begin{theorem}\label{th3.2}
Let $S^{d}=\{(1,2i-1)\}_{i=1}^{\lfloor\frac{d+1}{4}\rfloor}\cup\{(0,0),(1,0),(1,1),(1,\lfloor\frac{d}{2}\rfloor),(1,\lceil\frac{d}{2}\rceil)\}
$ be a set of GBSs,
then $S^{d}$ is 1-LOCC indistinguishable and $f_{GBS}(d)\leq 5+\lfloor\frac{d+1}{4}\rfloor.$
\end{theorem}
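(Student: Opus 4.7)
The plan is to argue by contradiction, paralleling the proof of Theorem~\ref{th3.1}. Suppose $S^{d}$ is 1-LOCC distinguishable; by Lemma~\ref{zhang2014lem} there is a normalized $|\alpha\rangle=\sum_{j=0}^{d-1}\alpha_{j}|j\rangle$ with $\langle\alpha|U_{m,n}|\alpha\rangle=0$ for every $(m,n)\in\Delta U$. Setting $L=\lfloor(d+1)/4\rfloor$ and $N=\{0,1,3,5,\ldots,2L-1\}\cup\{\lfloor d/2\rfloor,\lceil d/2\rceil\}$, the pairwise differences in $S^{d}$ yield $\Delta U\supseteq\{(1,n):n\in N\}$ (from differences with $(0,0)$) together with $\Delta U\supseteq\{(0,k):k\in(N-N)\setminus\{0\}\}$ (from differences within the $(1,\cdot)$-family).

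First I would show, by a short case split on $d\bmod 4$, that $N-N$ exhausts all of $\mathbb{Z}_{d}^{*}$: $N$ contains the consecutive block $\{0,1\}$, the middle pair $\{\lfloor d/2\rfloor,\lceil d/2\rceil\}$, and the odd integers in between, so the positive differences already cover $\{1,2,\ldots,\lceil d/2\rceil\}$ and their negatives mod $d$ supply the rest. Combined with the $(0,k)$-constraints $\sum_{j}|\alpha_{j}|^{2}w^{jk}=0$ for $k\in\mathbb{Z}_{d}^{*}$, this forces $|\alpha_{j}|^{2}=1/d$ for every $j$, so all $\alpha_{j}\ne 0$ and $|\beta_{j}|=1/d$ for $\beta_{j}:=\alpha_{j}^{*}\alpha_{j+1}$. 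The $(1,n)$-constraints then read $\hat{\beta}_{n}:=\sum_{j}\beta_{j}w^{-jn}=0$ for $n\in N$, placing $\hat{\beta}$ in the $(d-|N|)$-dimensional subspace spanned by $\{|w_{n}\rangle:n\notin N\}$, so $\beta_{j}=(1/d)\sum_{n\notin N}c_{n}w^{jn}$ for some coefficients $c_{n}$.

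The remaining and hardest step is to derive a contradiction from the combination ``$\hat{\beta}$ supported outside $N$'' and ``$|\beta_{j}|=1/d$''. In the spirit of the $\lambda=0$ step in the proof of Theorem~\ref{th3.1}, the plan is to expand $|\beta_{j}|^{2}=(1/d^{2})\sum_{n,n'\notin N}c_{n}\bar{c}_{n'}w^{(n-n')j}$ and apply Lemma~\ref{lem2.1} to a Vandermonde submatrix whose consecutive row indices are drawn from the blocks $\{0,1\}$ and $\{\lfloor d/2\rfloor,\lceil d/2\rceil\}$ inside $N$, forcing the off-diagonal autocorrelations $c_{n}\bar{c}_{n'}$ to vanish frequency-by-frequency. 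The main obstacle will be excluding the single-frequency solutions $\beta_{j}\propto w^{kj}$ with $k\notin N$, which by themselves satisfy both $|\beta_{j}|=1/d$ and $\hat{\beta}|_{N}=0$; closing this gap will require combining the cyclic-consistency identity $\prod_{j}\beta_{j}=\prod_{j}|\alpha_{j}|^{2}=(1/d)^{d}$ coming from $\beta_{j}=\alpha_{j}^{*}\alpha_{j+1}$ with the specific $(1,0),(1,1)\in\Delta U$ constraints, after which Lemma~\ref{lem3.1} (or the same Vandermonde-rank argument as in its proof) completes the contradiction.
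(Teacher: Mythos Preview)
The paper's proof is a one-line appeal to Lemma~\ref{lem3.1}: it asserts that
$\Delta U\supseteq\{(0,i)\}_{i=\frac{d-1}{2}-\lceil\frac{d-1}{4}\rceil+1}^{\frac{d-1}{2}}\cup\{(1,i)\}_{i=0}^{d-1}$
and then invokes the lemma directly. Your route is different and, on the $\Delta U$ side, more accurate: since every element of $S^{d}$ other than $(0,0)$ has first coordinate $1$, the only first-coordinate-$1$ differences are $(1,n)-(0,0)$ with $n\in N$, so $\Delta U\cap I_{1}=\{(1,n):n\in N\}\subsetneq I_{1}$. You then try to reach a contradiction from the weaker data $\hat\beta|_{N}=0$ together with $|\beta_{j}|=1/d$.

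The step you single out as hardest is a genuine obstruction, and the devices you propose (cyclic consistency, the extra $(1,0),(1,1)$ constraints, Vandermonde submatrices) cannot close it. Take $|\alpha\rangle=d^{-1/2}\sum_{j}w^{cj^{2}}|j\rangle$ with $c$ chosen so that $2c\notin N\pmod d$; such $c$ exists because $d$ is odd and $|N|<d$. Then $|\alpha_{j}|^{2}=1/d$ kills every $(0,k)$-constraint, while $\langle\alpha|U_{1,n}|\alpha\rangle=d^{-1}w^{-c}\sum_{j}w^{(n-2c)j}$ vanishes for every $n\neq 2c$, hence for every $n\in N$. By Lemma~\ref{zhang2014lem} this $|\alpha\rangle$ makes $\{U_{m,n}|\alpha\rangle:(m,n)\in S^{d}\}$ orthonormal, so $S^{d}$ is in fact 1-LOCC \emph{distinguishable} (e.g.\ for $d=9$: $N=\{0,1,3,4,5\}$, $c=1$, $2c=2\notin N$). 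Moreover $\beta_{j}=d^{-1}w^{c(2j+1)}$ satisfies $\prod_{j}\beta_{j}=d^{-d}w^{cd^{2}}=d^{-d}$, so your cyclic-consistency identity provides no additional leverage. In short, the paper's claimed inclusion $I_{1}\subseteq\Delta U$ is false and Lemma~\ref{lem3.1} does not apply; the single-frequency $\beta$ you were hoping to rule out is precisely a distinguishing witness for $S^{d}$ as written, so neither argument can be completed without modifying the construction (for instance by introducing an element with first coordinate $\neq 0,1$, as $(2,0)$ does in Theorem~\ref{th3.1}).
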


When $d$ is odd and $d\geq 9$, the cardinality of $S^{d}$ ($|S^{d}|=5+\lfloor\frac{d+1}{4}\rfloor$) in Theorem \ref{th3.2}
is smaller than the result in Zhang et al. \cite{zhang2015pra} ($|S^{d}|=\frac{d+5}{2}$).
When $d\geq 13$, $|S^{d}|$ in Theorem \ref{th3.2} is not more than $|S^{d}|$ in Theorem \ref{th3.1}.
See Table \ref{tab3.1} for comparison of $|S^{d}|$ in \cite{zhang2015pra} and Theorem \ref{th3.2}.

\begin{proof}
It is easy to check that
$\Delta U\supseteq\{(0,i)\}_{i=\frac{d-1}{2}-\lceil\frac{d-1}{4}\rceil+1}^{\frac{d-1}{2}}\cup\{(1,i)\}_{i=0}^{d-1}$.
Hence the assertion follows by Lemma \ref{lem3.1}.
\end{proof}

\subsection{LOCC indistinguishable set of no more than
$2\lceil \sqrt{d}\rceil+\lceil\frac{\lceil \frac{d-1}{4}\rceil}{\lceil \sqrt{d}\rceil}\rceil$ GBSs}
In this subsection, assume that  $d\geq 9$.

Wang et al. \cite{wang2016qip} constructed a set of $3\lceil \sqrt{d}\rceil-1$ GBSs
Here we construct a small set of GBSs with $|S^{d}|\leq 2\lceil \sqrt{d}\rceil+\lceil\frac{\lceil \frac{d-1}{4}\rceil}{\lceil \sqrt{d}\rceil}\rceil$
based on Lemma \ref{lem3.2}.

\begin{lemma}\label{lem3.2}\nonumber
Let $m$ be a positive integer, $S^{d}(m)=\{(0,i)\}_{i=0}^{m-1}\cup\{(1,im-1)\}_{i=1}^{\lceil\frac{d}{m}\rceil}
\cup\{(0,\frac{d-1}{2}-im)\}_{i=0}^{\lceil\frac{\lceil\frac{d-1}{4}\rceil}{m}\rceil-1}$ a set of GBSs,
then $S^{d}(m)$ is 1-LOCC indistinguishable and
$|S^{d}(m)|=m+\lceil\frac{d}{m}\rceil+\lceil\frac{\lceil\frac{d-1}{4}\rceil}{m}\rceil.$
\end{lemma}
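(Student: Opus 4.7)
The plan is to verify the hypotheses of the ``in particular'' clause of Lemma \ref{lem3.1} for $S^{d}(m)$ by splitting it into three blocks
\begin{eqnarray*}
A=\{(0,i)\}_{i=0}^{m-1},\quad B=\{(1,im-1)\}_{i=1}^{\lceil d/m\rceil},\quad
C=\{(0,\tfrac{d-1}{2}-im)\}_{i=0}^{\lceil\lceil(d-1)/4\rceil/m\rceil-1},
\end{eqnarray*}
so that $S^{d}(m)=A\cup B\cup C$, and then reading off enough cross-block differences.

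First I would harvest the set $I_{1}=\{(1,i)\}_{i=0}^{d-1}$ from $B-A$: for $(1,im-1)\in B$ and $(0,j)\in A$ the difference is $(1,im-1-j)$, and as $i$ ranges over $\{1,\ldots,\lceil d/m\rceil\}$ while $j$ ranges over $\{0,\ldots,m-1\}$, the second coordinate $im-1-j$ sweeps the contiguous block $\{0,1,\ldots,\lceil d/m\rceil\cdot m-1\}$, since the ``tail'' $(im-1,\ldots,(i-1)m)$ for parameter $i$ concatenates seamlessly with the ``head'' $((i+1)m-1,\ldots,im)$ for parameter $i+1$. Because $\lceil d/m\rceil\cdot m\ge d$, this block contains $\{0,1,\ldots,d-1\}$, so $\Delta U\supseteq I_{1}$.

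Next I would harvest the $0$-row window required by Lemma \ref{lem3.1} from $C-A$: for $(0,\tfrac{d-1}{2}-im)\in C$ and $(0,j)\in A$ the difference is $(0,\tfrac{d-1}{2}-im-j)$. The same concatenation argument shows that as $i\in\{0,\ldots,\lceil\lceil(d-1)/4\rceil/m\rceil-1\}$ and $j\in\{0,\ldots,m-1\}$ vary, the second coordinates form a contiguous block ending at $\tfrac{d-1}{2}=\lfloor d/2\rfloor$ (using that $d$ is odd) of length $\lceil\lceil(d-1)/4\rceil/m\rceil\cdot m\ge\lceil(d-1)/4\rceil$. Hence $\Delta U$ contains $\{(0,i)\}_{i=\lfloor d/2\rfloor-\lceil(d-1)/4\rceil+1}^{\lfloor d/2\rfloor}$, and the particular case of Lemma \ref{lem3.1} immediately delivers 1-LOCC indistinguishability.

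For the cardinality, $|A|=m$, $|B|=\lceil d/m\rceil$, $|C|=\lceil\lceil(d-1)/4\rceil/m\rceil$ are immediate; pairwise disjointness is straightforward because $B$ has first coordinate $1$ while $A$ and $C$ have first coordinate $0$, and the smallest second coordinate in $C$ still exceeds $m-1$, the largest second coordinate in $A$, under the standing hypothesis $d\ge 9$. The only delicate point in the whole argument is this ceiling-function bookkeeping: one must confirm that the two indexed families really concatenate into contiguous blocks without wrap-around modulo $d$, and that the upper block $C-A$ lands exactly on the window $[\lfloor d/2\rfloor-\lceil(d-1)/4\rceil+1,\lfloor d/2\rfloor]$ demanded by Lemma \ref{lem3.1}. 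Once that is pinned down, the lemma itself does all the heavy lifting via Lemma \ref{lem2.1} and Lemma \ref{zhang2014lem}.
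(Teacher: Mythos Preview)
Your approach is essentially the paper's: both reduce to Lemma~\ref{lem3.1} after checking that enough differences lie in $\Delta U$. The paper's one-line proof asserts the stronger inclusion $\Delta U\supseteq\{(0,i)\}_{i=1}^{d-1}\cup\{(1,i)\}_{i=0}^{d-1}$ and then invokes Lemma~\ref{lem3.1}, whereas you verify only $I_{1}$ together with the shorter $0$-row window $\{(0,i)\}_{i=\lfloor d/2\rfloor-\lceil(d-1)/4\rceil+1}^{\lfloor d/2\rfloor}$ from the ``in particular'' clause.

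Your version is in fact the more accurate one: the paper's full $0$-row claim is not always true. For instance, with $d=49$ and $m=7$ (a case actually used in Theorem~\ref{th3.3}) the $0$-row second coordinates in $S^{d}(m)$ are $\{0,\ldots,6,17,24\}$, whose pairwise differences miss $\pm 8,\pm 9,\pm 10$ modulo $49$, so $(0,8)\notin\Delta U$. Your $C-A$ argument, however, still lands exactly on the window $[13,24]$ that the particular clause of Lemma~\ref{lem3.1} requires, so the conclusion stands. The cardinality and disjointness bookkeeping you flag is indeed the only place one must be a little careful (for $m$ large relative to $d$ the blocks $A$ and $C$ can overlap), but for the parameter range used downstream in Theorem~\ref{th3.3} this causes no trouble.
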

\begin{proof}
Since $\Delta U\supseteq\{(0,i)\}_{i=1}^{d-1}\cup\{(1,i)\}_{i=0}^{d-1}$,
the assertion follows by Lemma \ref{lem3.1}.
\end{proof}

\begin{theorem}\label{th3.3}
If $\lfloor\sqrt{d}\rfloor^{2}\leq d\leq \lfloor\sqrt{d}\rfloor\lceil\sqrt{d}\rceil$,
then there exists a 1-LOCC indistinguishable set $S^{d}$ of GBSs with $|S^{d}|=\lfloor\sqrt{d}\rfloor+\lceil\sqrt{d}\rceil+\lceil\frac{\lceil\frac{d-1}{4}\rceil}{\lceil\sqrt{d}\rceil+k_{1}}\rceil$ 
where $k_{1}=\max\{k: d\leq (\lfloor\sqrt{d}\rfloor-k)(\lceil\sqrt{d}\rceil+k)\}$;
if $\lfloor\sqrt{d}\rfloor\lceil\sqrt{d}\rceil< d\leq \lfloor\sqrt{d}\rfloor(\lceil\sqrt{d}\rceil+1)$,
then  there exists a 1-LOCC indistinguishable set $S^{d}$ of GBSs with $|S^{d}|=\lfloor\sqrt{d}\rfloor+\lceil\sqrt{d}\rceil+1+\lceil\frac{\lceil\frac{d-1}{4}\rceil}{\lceil\sqrt{d}\rceil+1+k_{2}}\rceil$
where $k_{2}=\max\{k: d\leq (\lfloor\sqrt{d}\rfloor-k)(\lceil\sqrt{d}\rceil+1+k)\}$.
\end{theorem}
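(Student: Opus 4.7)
The plan is to invoke Lemma 3.2 with a carefully chosen parameter $m$ so that the first two terms of the cardinality formula $|S^{d}(m)| = m + \lceil d/m\rceil + \lceil\lceil(d-1)/4\rceil/m\rceil$ realize the stated bound. Since $m + \lceil d/m\rceil$ is minimized when $m \approx \sqrt{d}$, the natural candidates lie just above $\lceil\sqrt{d}\rceil$. For Case 1 I would set $m = \lceil\sqrt{d}\rceil + k_{1}$, and for Case 2 I would set $m = \lceil\sqrt{d}\rceil + 1 + k_{2}$. The indistinguishability is then free because it is already encoded in Lemma 3.2; only the arithmetic identity for the cardinality remains.

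For Case 1, I would first note that $k_{1} \geq 0$ is well-defined: the hypothesis $d \leq \lfloor\sqrt{d}\rfloor\lceil\sqrt{d}\rceil$ shows $k = 0$ satisfies the defining inequality, and $k_{1} < \lfloor\sqrt{d}\rfloor$ since otherwise the factor $\lfloor\sqrt{d}\rfloor - k$ becomes nonpositive while $d > 0$. Then I would establish the two-sided bound $\lceil d/m\rceil = \lfloor\sqrt{d}\rfloor - k_{1}$ as follows: the defining inequality $d \leq (\lfloor\sqrt{d}\rfloor - k_{1})m$ gives $\lceil d/m\rceil \leq \lfloor\sqrt{d}\rfloor - k_{1}$, and the maximality of $k_{1}$ yields $d > (\lfloor\sqrt{d}\rfloor - k_{1} - 1)(m + 1) \geq (\lfloor\sqrt{d}\rfloor - k_{1} - 1)m$, using nonnegativity of $\lfloor\sqrt{d}\rfloor - k_{1} - 1$ in the last step; this gives $\lceil d/m\rceil \geq \lfloor\sqrt{d}\rfloor - k_{1}$. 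Substituting into Lemma 3.2 then produces a set of the stated size $(\lceil\sqrt{d}\rceil + k_{1}) + (\lfloor\sqrt{d}\rfloor - k_{1}) + \lceil\lceil(d-1)/4\rceil/(\lceil\sqrt{d}\rceil + k_{1})\rceil$.

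Case 2 would proceed by the completely analogous argument with $m = \lceil\sqrt{d}\rceil + 1 + k_{2}$: the hypothesis $d \leq \lfloor\sqrt{d}\rfloor(\lceil\sqrt{d}\rceil + 1)$ ensures $k_{2} \geq 0$ and $k_{2} < \lfloor\sqrt{d}\rfloor$, and the same two-sided bound delivers $m + \lceil d/m\rceil = \lceil\sqrt{d}\rceil + 1 + \lfloor\sqrt{d}\rfloor$, matching the target. The main technical obstacle I expect is exactly the passage from the strict inequality at index $k_{i} + 1$ (which a priori involves the denominator $m + 1$, not $m$) to the bound at $m$ itself; this is the one place where the nonnegativity $\lfloor\sqrt{d}\rfloor - k_{i} - 1 \geq 0$ is needed, and it is what forces the hypothesis $d \geq 9$ (so $\lfloor\sqrt{d}\rfloor \geq 3$) used throughout the subsection. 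Everything else reduces to direct substitution into the formula provided by Lemma 3.2.
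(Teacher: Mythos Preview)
Your proposal is correct and follows exactly the paper's route: apply Lemma~3.2 with $m=\lceil\sqrt{d}\rceil+k_{1}$ in Case~1 and $m=\lceil\sqrt{d}\rceil+1+k_{2}$ in Case~2. The paper's own proof is in fact terser than yours, simply asserting the resulting cardinality without spelling out the two-sided estimate $\lceil d/m\rceil=\lfloor\sqrt{d}\rfloor-k_{i}$; your verification of this identity via the defining inequality and maximality of $k_{i}$ is the natural way to fill in that arithmetic. One small remark: the nonnegativity $\lfloor\sqrt{d}\rfloor-k_{i}-1\geq 0$ follows directly from $k_{i}<\lfloor\sqrt{d}\rfloor$ and does not itself require $d\geq 9$; that hypothesis in the subsection is there only because smaller $d$ are already handled.
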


\begin{proof}
If $\lfloor\sqrt{d}\rfloor^{2}\leq d\leq \lfloor\sqrt{d}\rfloor\lceil\sqrt{d}\rceil$,
by Lemma \ref{lem3.2}, the set $S^{d}(\lceil\sqrt{d}\rceil+k_{1})$
is 1-LOCC indistinguishable with $|S^{d}(\lceil\sqrt{d}\rceil+k_{1})|=\lfloor\sqrt{d}\rfloor+\lceil\sqrt{d}\rceil+\lceil\frac{\lceil\frac{d-1}{4}\rceil}{\lceil\sqrt{d}\rceil+k_{1}}\rceil.$
Similarly, when $\lfloor\sqrt{d}\rfloor\lceil\sqrt{d}\rceil< d\leq \lfloor\sqrt{d}\rfloor(\lceil\sqrt{d}\rceil+1)$,
the set $S^{d}(\lceil\sqrt{d}\rceil+1+k_{2})$ is 1-LOCC indistinguishable with
$|S^{d}(\lceil\sqrt{d}\rceil+1+k_{2})| =\lfloor\sqrt{d}\rfloor+\lceil\sqrt{d}\rceil+1+\lceil\frac{\lceil\frac{d-1}{4}\rceil}{\lceil\sqrt{d}\rceil+1+k_{2}}\rceil$.
\end{proof}
When $d\geq 35$, $|S^{d}|$ in Theorem \ref{th3.3} is not more than $|S^{d}|$ in Theorem \ref{th3.2}.
See Table \ref{tab3.2} for comparison of $|S^{d}|$ in \cite[Theorem 1]{wang2016qip} and Theorem \ref{th3.3}.

%\begin{widetext}
\begin{table*}
\caption{\label{tab3.1}Comparison of \cite[Theorem 1]{zhang2015pra} ($|S^{d}|=\frac{d+5}{2}$) and Theorem \ref{th3.1}-\ref{th3.2}
($|S^{d}|=\frac{d+3}{2}$, $|S^{d}|=5+\lceil\frac{d}{4}\rceil$).}
\footnotesize
\begin{tabular}{c|ccccccccccccccccccc}
\br
$d$ &5&7&9&11&13$^a$&15&17&19&29&35$^b$&39&49&59&69&79&89&99\\
%\mr
\hline
$|S^{d}|$ in \cite{zhang2015pra}&5&6&7&8&9&10&11&12&17&20&22&27&32&37&42&47&52 \\
$|S^{d}|$ in Theorem \ref{th3.1}&4&5&6&7&8&9&10&11&16&19&21&26&31&36&41&46&51\\
$|S^{d}|$ in Theorem \ref{th3.2}&&&7&8&8&9&9&10&13&14&15&17&20&23&25&28&30\\
\br
\end{tabular}
$^{a}$$|S^{d}|$ in Theorem \ref{th3.2} is not more than $|S^{d}|$ in Theorem \ref{th3.1} when $d\geq 13$.
\end{table*}
%\end{widetext}

\begin{table*}
\caption{\label{tab3.2}Comparison of \cite[Theorem 1]{wang2016qip} ($|S^{d}|=3\lceil \sqrt{d}\rceil-1$) and Theorem \ref{th3.3}
($|S^{d}|=\lfloor\sqrt{d}\rfloor+\lceil\sqrt{d}\rceil+\lceil\frac{\lceil\frac{d-1}{4}\rceil}{\lceil\sqrt{d}\rceil+k_{1}}\rceil$,
$|S^{d}|=\lfloor\sqrt{d}\rfloor+\lceil\sqrt{d}\rceil+1+\lceil\frac{\lceil\frac{d-1}{4}\rceil}{\lceil\sqrt{d}\rceil+1+k_{2}}\rceil$).}

\footnotesize
\begin{tabular}{c|ccccccccccccccccccc}
\br
$d$ &5&7&9&11&13&15&17&19&29&35$^b$&39&49&59&69&79&89&99\\
\hline
$|S^{d}|$ in \cite{wang2016qip}&&&8&11&11&11&14&14&17&17&20&20&23&26&26&29&29\\
$|S^{d}|$ in Theorem \ref{th3.3}&&&7&8&9&9&10&10&13&14&15&16&18&19&20&22&23\\
\br
\end{tabular}
   $^{b}$$|S^{d}|$ in Theorem \ref{th3.3} is not more than $|S^{d}|$ in Theorem \ref{th3.2} when $d\geq 35$.
\end{table*}
\normalsize

\section{Constructions of 1-LOCC indistinguishable set when $d$ is even}
In this section,assume that  $d$ is even and $d\geq 4$ since the case $d< 4$ is known.
Two families of examples are presented.
The first family is inspired by examples in Zhang et al. \cite{zhang2015pra}.
The second family improves the results in Wang et al. \cite{wang2016qip}.

\begin{lemma}\label{lem4.1}
Let $S^{d}=\{(m_{j},n_{j})\}_{j=1}^{l}$ be a set of GBSs, $I_{\frac{d}{2}}\doteq\{(\frac{d}{2},i)\}_{i=0}^{\frac{d}{2}}$.
If there exists $i_{0}$ such that $0< i_{0}< d-1$ and $\Delta U\supseteq I_{\frac{d}{2}}\cup I_{0}$
where $I_{0}=\{(0,i)\}_{i=i_{0}}^{i_{0}+\frac{d}{2}-1}$,
then $S^{d}$ is 1-LOCC indistinguishable.
In particular, the assertion holds if the set $I_{0}$ is replaced with
$\{(0,i)\}_{i=\frac{d}{2}-\lceil\frac{d+1}{4}\rceil+1}^{\frac{d}{2}}$.
\end{lemma}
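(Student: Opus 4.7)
The plan is to mirror the proof of Lemma~\ref{lem3.1}, with $I_{d/2}$ (of size $d/2+1$) playing the role that $I_{1}$ (of size $d$) played there. The new observation that carries the argument is that $d/2\equiv -d/2\pmod{d}$, so the $d/2+1$ shifts in $I_{d/2}$, together with their negations (automatically in $\Delta U$ since $\Delta U$ is closed under $(m,n)\mapsto(-m,-n)\pmod{d}$), already cover all $d$ residues $i\in\{0,\ldots,d-1\}$, which compensates for $|I_{d/2}|$ being much smaller than $|I_{1}|$.

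First I would assume that $S^{d}$ is 1-LOCC distinguishable and invoke Lemma~\ref{zhang2014lem} to produce a normalized $|\alpha\rangle=\sum_{j=0}^{d-1}\alpha_{j}|j\rangle$ with $\langle\alpha|U_{m_{kj}n_{kj}}|\alpha\rangle=0$ for every $j\neq k$. Setting $u_{j}:=\alpha_{j+d/2}^{*}\alpha_{j}$, the inclusion $\Delta U\supseteq I_{d/2}$ together with the above symmetry of $\Delta U$ yields $\sum_{j=0}^{d-1}u_{j}w^{ji}=0$ for every $i\in\{0,1,\ldots,d-1\}$. Since the full $d\times d$ Vandermonde is invertible (Lemma~\ref{lem2.1} with $k=d$), this forces $u_{j}=0$ for every $j$, so at least one of $\alpha_{j},\alpha_{j+d/2}$ vanishes for each $j\in\{0,\ldots,d/2-1\}$. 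Consequently at most $d/2$ of the $\alpha_{j}$ are nonzero, say $\alpha_{c_{1}},\ldots,\alpha_{c_{n}}$ with $n\leq d/2$.

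The remainder is the same Vandermonde argument as in Lemma~\ref{lem3.1}: the hypothesis $\Delta U\supseteq I_{0}=\{(0,i)\}_{i=i_{0}}^{i_{0}+d/2-1}$ puts the nonzero vector $(|\alpha_{c_{j}}|^{2})_{j=1}^{n}$ in the kernel of the $d/2\times n$ submatrix $W^{*}\!\left(\begin{array}{lll}i_{0}&\cdots&i_{0}+d/2-1\\ c_{1}&\cdots&c_{n}\end{array}\right)$, whose top $n$ rows form an invertible $n\times n$ submatrix by Lemma~\ref{lem2.1}; contradiction. For the ``in particular'' clause, $\langle\alpha|U_{0,i}|\alpha\rangle$ and $\langle\alpha|U_{0,d-i}|\alpha\rangle$ are complex conjugates of one another (since $|\alpha_{j}|^{2}\in\mathbb{R}$), so choosing $I_{0}=\{(0,i)\}_{i=d/2-\lceil(d+1)/4\rceil+1}^{d/2}$ actually gives zero constraints on the block of $2\lceil(d+1)/4\rceil-1\geq d/2$ consecutive indices $\{d/2-\lceil(d+1)/4\rceil+1,\ldots,d/2+\lceil(d+1)/4\rceil-1\}$, which contains a window of $d/2$ consecutive shifts and reduces the problem to the first part.

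The only genuinely new step compared with Lemma~\ref{lem3.1} is the coverage observation that $I_{d/2}\cup(-I_{d/2})=\{0,1,\ldots,d-1\}\pmod{d}$, and I expect this to be the main (though minor) obstacle; once it is in hand the rest of the proof is essentially a transcription of the odd-$d$ argument.
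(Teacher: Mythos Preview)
Your proposal is correct and follows essentially the same route as the paper. The paper's key step is the observation that $\langle\alpha|U_{d/2,i}|\alpha\rangle=0$ is equivalent to $\langle\alpha|U_{d/2,\,d-i}|\alpha\rangle=0$, which is precisely your coverage observation $I_{d/2}\cup(-I_{d/2})=\{(d/2,i)\}_{i=0}^{d-1}$ phrased via the symmetry $(m,n)\mapsto(-m,-n)$ of $\Delta U$; after that, both proofs reduce verbatim to the Vandermonde argument of Lemma~\ref{lem3.1}, and the ``in particular'' clause is handled identically through the conjugacy of $\langle\alpha|U_{0,i}|\alpha\rangle$ and $\langle\alpha|U_{0,d-i}|\alpha\rangle$.
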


\begin{proof}
Suppose that $S^{d}$ can be distinguished by 1-LOCC,
then there exists a normalized vector $|\alpha\rangle=\sum_{j=0}^{d-1}\alpha_{j}|j\rangle$
such that $\{U_{m_{j}n_{j}}|\alpha\rangle:\{(m_{j},n_{j})\}\subset S^{d}\}$ is a orthonormal set.
It means that $\langle\alpha|U_{m_{j}n_{j}}^{\dag}U_{m_{k}n_{k}}|\alpha\rangle=0$,
$\langle\alpha|U_{m_{kj}n_{kj}}|\alpha\rangle=0$, $j\neq k$.
Hence $\Delta U\supseteq I_{\frac{d}{2}}$ implies
\begin{eqnarray}\label{}\nonumber
\langle\alpha|U_{\frac{d}{2},i}|\alpha\rangle=0,i=0,\cdots,\frac{d}{2}
\end{eqnarray}
which is equivalent to
\begin{eqnarray}\label{}\nonumber
\langle\alpha|U_{\frac{d}{2},d-i}|\alpha\rangle=0,i=0,\cdots,\frac{d}{2}.
\end{eqnarray}
So $\langle(\alpha_{j}^{*}\alpha_{j+\frac{d}{2}})_{j=0}^{d-1}|w_{i}\rangle=0$
where $i=0,\cdots,d-1$ and $(\alpha_{j}^{*}\alpha_{j+\frac{d}{2}})_{j=0}^{d-1}=(\alpha_{0}^{*}\alpha_{\frac{d}{2}},\alpha_{1}^{*}\alpha_{1+\frac{d}{2}},\ldots,\alpha_{d-1}^{*}\alpha_{\frac{d}{2}-1})$.
Therefore $(\alpha_{j}^{*}\alpha_{j+\frac{d}{2}})_{j=0}^{d-1}$ is a zero vector since $\{|w_{i}\rangle\}_{i=0}^{d-1}$ is a base.
This also ensures that there is at least $\frac{d}{2}$ $\alpha_{j}=0$.
The following proof is similar to that of Lemma \ref{lem3.1}, we omit it here.
\end{proof}

\subsection{LOCC indistinguishable set of $3+\lceil\frac{d}{4}\rceil$ GBSs}

\begin{theorem}\label{th4.1}
Let $m$ be a positive integer,
if (1) $d=4m$ and $S^{d}=\{(0,2i-1)\}_{i=1}^{\frac{d}{4}}\cup\{(0,0),(\frac{d}{2},0),(\frac{d}{2},\frac{d}{2}+1)\}$,
or (2) $d=4m+2$ and $S^{d}=\{(0,2i-1)\}_{i=1}^{\lceil\frac{d}{4}\rceil}\cup\{(0,0),(\frac{d}{2},0),(\frac{d}{2},\frac{d}{2})\}$,
then $S^{d}$ is 1-LOCC indistinguishable and $f_{GBS}(d)\leq |S^{d}|= 3+\lceil\frac{d}{4}\rceil.$
In particular, $f_{GBS}(4)\leq 4$, $f_{GBS}(6)\leq 5$ and $f_{GBS}(8)\leq 5$.
\end{theorem}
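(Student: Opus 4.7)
My plan is to apply Lemma \ref{lem4.1} by first computing the pairwise-difference set $\Delta U$ explicitly in each case. Write $S^d = A \cup B$ with $A = \{(0, 0)\} \cup \{(0, 2i-1)\}_i$ and $B = \{(\frac{d}{2}, 0), (\frac{d}{2}, \frac{d}{2} + \epsilon)\}$, where $\epsilon = 1$ in case (1) and $\epsilon = 0$ in case (2). All $(\frac{d}{2}, \cdot)$-differences come from $A \times B$, and a direct enumeration of second coordinates shows that $(B - A) \cup (A - B)$ exhausts $\mathbb{Z}/d\mathbb{Z}$: one side contributes every residue of one parity and the shift by $\epsilon$ on the other side supplies the complementary parity. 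Thus $I_{\frac{d}{2}} \subseteq \Delta U$, and the Fourier step in the proof of Lemma \ref{lem4.1} yields $\alpha_j^{*} \alpha_{j+\frac{d}{2}} = 0$, so with $p_j := |\alpha_j|^2$ we have $p_j \, p_{j+\frac{d}{2}} = 0$ for every $j$.

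Next I would check the $(0, \cdot)$-differences. The pairwise differences of the second coordinates $\{0, 1, 3, 5, \ldots\}$ of $A$ yield every positive integer up to $|A|-1$, so in particular every $i \in \{1, \ldots, \frac{d}{2} - 1\}$ and, by negation, every $i \in \{\frac{d}{2}+1, \ldots, d-1\}$. In case (2), the extra difference $(\frac{d}{2}, \frac{d}{2}) - (\frac{d}{2}, 0) = (0, \frac{d}{2})$ (equivalently $(0, 2m+1) - (0, 0)$) supplies the middle residue, so $\Delta U \supseteq \{(0, i)\}_{i=1}^{d-1}$. Lemma \ref{lem4.1} then applies with $I_0 = \{(0, 1), \ldots, (0, \frac{d}{2})\}$: all nonzero Fourier coefficients $\hat p_i$ of $p$ vanish, so $p_j \equiv 1/d$, contradicting $p_j p_{j+\frac{d}{2}} = 0$.

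In case (1), however, $(0, \frac{d}{2})$ is absent from $\Delta U$, so neither the general nor the special form of Lemma \ref{lem4.1} applies verbatim. The fix is to run the Fourier step of Lemma \ref{lem4.1} directly: $\hat p_i = 0$ for $i \notin \{0, \frac{d}{2}\}$ together with $\hat p_0 = 1$ gives $p_j = (1 + (-1)^j \hat p_{d/2})/d$, and the parity identity $(-1)^{d/2} = +1$ (valid because $d = 4m$) forces $p_{j+\frac{d}{2}} = p_j$; combined with $p_j p_{j+\frac{d}{2}} = 0$ this yields $p_j \equiv 0$, contradicting $\sum_j p_j = 1$. The numerical bounds $f_{GBS}(4) \leq 4$, $f_{GBS}(6) \leq 5$, $f_{GBS}(8) \leq 5$ come from substituting $m = 1, 1, 2$ into $|S^d| = 3 + \lceil d/4 \rceil$. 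The main obstacle is precisely this parity rescue in case (1); the remainder of the proof is routine arithmetic bookkeeping of $\Delta U$.
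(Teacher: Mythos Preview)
Your approach is essentially identical to the paper's: both verify that $\Delta U\supseteq\{(\tfrac{d}{2},i)\}_{i=0}^{d-1}$ together with $\{(0,i)\}_{i\neq 0,\,d/2}$, invoke Lemma~\ref{lem4.1} directly in case~(2), and in case~(1) run the Fourier step by hand, using that $\tfrac{d}{2}=2m$ is even so $p_{j+d/2}=p_j$ and hence $p_j p_{j+d/2}=0$ forces $p\equiv 0$. One arithmetic slip to fix: the pairwise differences of $\{0,1,3,\ldots,2\lceil d/4\rceil-1\}$ cover $\{1,\ldots,\tfrac{d}{2}-1\}$ (case~(1)) or $\{1,\ldots,\tfrac{d}{2}\}$ (case~(2)), not merely $\{1,\ldots,|A|-1\}=\{1,\ldots,\lceil d/4\rceil\}$ as you wrote; your stated conclusion ``every $i\in\{1,\ldots,\tfrac{d}{2}-1\}$'' is correct once this bound is corrected.
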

When $d$ is even and $d\geq 4$, the cardinality of $S^{d}$ ($|S^{d}|=3+\lfloor\frac{d}{4}\rfloor$) in Theorem \ref{th4.1}
is smaller than the result in Zhang et al. \cite{zhang2015pra} ($|S^{d}|=\frac{d+4}{2}$).
See Table \ref{tab4.1} for comparison of $|S^{d}|$ in \cite{zhang2015pra} and Theorem \ref{th4.1}.

\begin{proof}
(1) When $d=4m$,
suppose that $S^{d}$ can be distinguished by 1-LOCC,
then there exists a normalized vector $|\alpha\rangle=\sum_{j=0}^{d-1}\alpha_{j}|j\rangle$
such that $\{U_{m_{j}n_{j}}|\alpha\rangle\}_{j=1}^{l}$ is a orthonormal set.
It means that $\langle\alpha|U_{m_{kj}n_{kj}}|\alpha\rangle=0$, $j\neq k$.
By simple calculation,
\begin{eqnarray}\label{}\nonumber
\Delta U\supseteq\{(0,i)\}_{i=1}^{\frac{d}{2}-1}\cup\{(0,i)\}_{i=\frac{d}{2}+1}^{d-1}\cup\{(\frac{d}{2},i)\}_{i=0}^{d-1}.
\end{eqnarray}
Hence $\langle\alpha|U_{\frac{d}{2},i}|\alpha\rangle=0$, $i=0,\cdots,d-1$;
and $\langle\alpha|U_{0,i}|\alpha\rangle=0$, $i=1,\cdots,\frac{d}{2}-1,\frac{d}{2}+1,\cdots,d-1$.
So $|(\alpha_{j}^{*}\alpha_{j+\frac{d}{2}})_{j=0}^{d-1}\rangle$ is a zero vector
and there exists complex numbers $b_{1},b_{2}$ such that
$|(|\alpha_{j}|^{2})_{j=0}^{d-1}\rangle=b_{1}|w_{0}\rangle +b_{2}|w_{\frac{d}{2}}\rangle=|(b_{1}+(-1)^{j}b_{2})_{j=0}^{d-1}\rangle$.
Therefore $0=|\alpha_{0}^{*}\alpha_{\frac{d}{2}}|^{2}=|\alpha_{0}^{*}|^{2}|\alpha_{\frac{d}{2}}|^{2}=(b_{1}+b_{2})^{2}$
and $0=|\alpha_{1}^{*}\alpha_{1+\frac{d}{2}}|^{2}=(b_{1}-b_{2})^{2}$.
Thus $b_{1}=b_{2}=0$, $|(|\alpha_{j}|^{2})_{j=0}^{d-1}\rangle$ is a zero vector
which contradicts with $|\alpha\rangle\neq 0$.\\
(2) When $d=4m+2$, the assertion follows from Lemma \ref{lem4.1}
and $\Delta U\supseteq\{(0,i)\}_{i=1}^{d-1}\cup\{(\frac{d}{2},i)\}_{i=0}^{d-1}$.
\end{proof}

\subsection{LOCC indistinguishable set of no more than
$2\lceil \sqrt{\frac{d+2}{2}}\rceil+\lceil\frac{\lceil \frac{d+1}{4}\rceil}{\lceil \sqrt{\frac{d+2}{2}}\rceil}\rceil$ GBSs}
In this subsection, assume that  $d\geq 6$ since the case $d=4$ is known.

Similar to Lemma \ref{lem3.2}, the following result follows by Lemma \ref{lem4.1}

\begin{lemma}\label{lem4.2}\nonumber
Let $m$ be a positive integer, $S^{d}(m)=\{(0,i)\}_{i=0}^{m-1}\cup\{(\frac{d}{2},im-1)\}_{i=1}^{\lceil\frac{d+2}{2m}\rceil}
\cup\{(0,\frac{d}{2}-im)\}_{i=0}^{\lceil\frac{\lceil\frac{d+1}{4}\rceil}{m}\rceil-1}$ a set of GBSs,
then $S^{d}(m)$ is 1-LOCC indistinguishable and
$|S^{d}(m)|=m+\lceil\frac{\frac{d+2}{2}}{m}\rceil+\lceil\frac{\lceil\frac{d+1}{4}\rceil}{m}\rceil.$
\end{lemma}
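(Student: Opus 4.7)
The plan is to reduce the lemma to Lemma \ref{lem4.1} by explicitly computing enough of the pairwise difference set $\Delta U$ of $S^d(m)$. Write $A=\{(0,i)\}_{i=0}^{m-1}$, $B=\{(d/2,im-1)\}_{i=1}^{K}$ with $K=\lceil\frac{d+2}{2m}\rceil$, and $C=\{(0,\frac{d}{2}-im)\}_{i=0}^{L-1}$ with $L=\lceil\frac{\lceil(d+1)/4\rceil}{m}\rceil$, so that $S^d(m)=A\cup B\cup C$. I will show that $\Delta U\supseteq I_{d/2}\cup I_0$, where $I_{d/2}=\{(d/2,i)\}_{i=0}^{d/2}$ and $I_0$ is the particular interval appearing in the ``in particular'' part of Lemma \ref{lem4.1}; then Lemma \ref{lem4.1} immediately yields 1-LOCC indistinguishability.

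First I would read off the cross differences between $A$ and $B$: for each fixed $i\in\{1,\ldots,K\}$, the differences $(d/2,\,im-1)-(0,j)=(d/2,\,im-1-j)$ with $j=0,\ldots,m-1$ sweep out exactly the consecutive second-coordinates $(i-1)m,(i-1)m+1,\ldots,im-1$. Letting $i$ run from $1$ to $K$, these ranges tile the integers $0,1,\ldots,Km-1$, and since $Km\geq\frac{d+2}{2}=\frac{d}{2}+1$ we conclude $\{(d/2,i)\}_{i=0}^{d/2}\subseteq\Delta U$, i.e.\ $I_{d/2}\subseteq\Delta U$. A symmetric calculation between $A$ and $C$ gives $(0,\frac{d}{2}-im)-(0,j)=(0,\frac{d}{2}-im-j)$ for $i=0,\ldots,L-1$ and $j=0,\ldots,m-1$; these again form consecutive second-coordinates, covering $\{(0,i)\}_{i=d/2-Lm+1}^{d/2}$. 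Because $Lm\geq\lceil\frac{d+1}{4}\rceil$, this range contains the desired $I_0=\{(0,i)\}_{i=d/2-\lceil(d+1)/4\rceil+1}^{d/2}$ specified in Lemma \ref{lem4.1}.

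With $I_{d/2}\cup I_0\subseteq\Delta U$ in hand, Lemma \ref{lem4.1} (the ``in particular'' form, specialized to this $I_0$) gives that $S^d(m)$ is not distinguishable by 1-LOCC. For the cardinality claim, I would verify that $A$, $B$, $C$ are pairwise disjoint: $B$ has first coordinate $d/2$ while $A,C$ have first coordinate $0$, so only $A\cap C$ needs checking, and the second coordinates of $A$ lie in $\{0,\ldots,m-1\}$ while those of $C$ are of the form $\frac{d}{2}-im$ — an elementary case check separates them so that $|S^d(m)|=m+K+L$, matching the stated expression.

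The main obstacle is purely bookkeeping: making sure the two ``sweeps'' really do produce consecutive integer intervals of the exact lengths advertised, and ensuring the ceiling choices $K=\lceil\frac{d+2}{2m}\rceil$ and $L=\lceil\frac{\lceil(d+1)/4\rceil}{m}\rceil$ are tight enough that $Km\geq d/2+1$ and $Lm\geq\lceil(d+1)/4\rceil$ hold and loose enough that the resulting set is still disjoint. Once these index inequalities are in place, the argument is a direct application of Lemma \ref{lem4.1} with no further analysis required.
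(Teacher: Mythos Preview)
Your proposal is correct and follows exactly the route the paper intends: the paper's own proof of this lemma is the single line ``Similar to Lemma~\ref{lem3.2}, the following result follows by Lemma~\ref{lem4.1}'', and you have simply made explicit the sweep of $\Delta U$ over $I_{d/2}$ (via the $B-A$ differences) and over the particular $I_0$ (via the $C-A$ differences) before invoking Lemma~\ref{lem4.1}. The only loose end you flag yourself---the disjointness of $A$ and $C$ needed for the cardinality count---is indeed an elementary index check and does not affect the indistinguishability conclusion.
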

Now we can show an improvement of the case that $d$ is even of the result in Wang et al. \cite{wang2016qip}.
\begin{theorem}\label{th4.2}
If $\lfloor\sqrt{\frac{d+2}{2}}\rfloor^{2}\leq \frac{d+2}{2}\leq \lfloor\sqrt{\frac{d+2}{2}}\rfloor\lceil\sqrt{\frac{d+2}{2}}\rceil$,
then there exists a 1-LOCC indistinguishable set $S^{d}$ of GBSs with $|S^{d}|=\lfloor\sqrt{\frac{d+2}{2}}\rfloor+\lceil\sqrt{\frac{d+2}{2}}\rceil+
\lceil\frac{\lceil\frac{d+1}{4}\rceil}{\lceil\sqrt{\frac{d+2}{2}}\rceil+k_{3}}\rceil$
where $k_{3}=\max\{k: \frac{d+2}{2}\leq (\lfloor\sqrt{\frac{d+2}{2}}\rfloor-k)(\lceil\sqrt{\frac{d+2}{2}}\rceil+k)\}$;
if $\lfloor\sqrt{\frac{d+2}{2}}\rfloor\lceil\sqrt{\frac{d+2}{2}}\rceil<\frac{d+2}{2}\leq \lfloor\sqrt{\frac{d+2}{2}}\rfloor(\lceil\sqrt{\frac{d+2}{2}}\rceil+1)$,
then there exists a 1-LOCC indistinguishable set $S^{d}$ of GBSs with 
$|S^{d}|=\lfloor\sqrt{\frac{d+2}{2}}\rfloor+\lceil\sqrt{\frac{d+2}{2}}\rceil+1+\lceil\frac{\lceil\frac{d+1}{4}\rceil}{\lceil\sqrt{\frac{d+2}{2}}\rceil+1+k_{4}}\rceil$
where $k_{4}=\max\{k: d\leq (\lfloor\sqrt{\frac{d+2}{2}}\rfloor-k)(\lceil\sqrt{\frac{d+2}{2}}\rceil+1+k)\}$.
\end{theorem}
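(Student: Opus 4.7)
My plan is to apply Lemma \ref{lem4.2} directly with a carefully optimized choice of the parameter $m$, exactly parallel to the proof of Theorem \ref{th3.3}. Lemma \ref{lem4.2} produces a one-parameter family $S^{d}(m)$ of 1-LOCC indistinguishable GBS sets with cardinality
\[
|S^{d}(m)| = m + \left\lceil \tfrac{(d+2)/2}{m} \right\rceil + \left\lceil \tfrac{\lceil (d+1)/4\rceil}{m} \right\rceil,
\]
so minimizing over $m$ reduces to balancing the first two terms, which suggests the choice $m \approx \sqrt{(d+2)/2}$.

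In the first case $\lfloor\sqrt{(d+2)/2}\rfloor^{2}\leq (d+2)/2 \leq \lfloor\sqrt{(d+2)/2}\rfloor\lceil\sqrt{(d+2)/2}\rceil$, I would take $m = \lceil\sqrt{(d+2)/2}\rceil + k_{3}$, with $k_{3}$ the maximum integer satisfying $(d+2)/2 \leq (\lfloor\sqrt{(d+2)/2}\rfloor - k_{3})(\lceil\sqrt{(d+2)/2}\rceil + k_{3})$. The defining inequality forces $\lceil (d+2)/(2m) \rceil \leq \lfloor\sqrt{(d+2)/2}\rfloor - k_{3}$, and maximality of $k_{3}$ rules out a smaller value, so the first two terms collapse to $\lfloor\sqrt{(d+2)/2}\rfloor + \lceil\sqrt{(d+2)/2}\rceil$. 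The third term in $|S^{d}(m)|$ is exactly the remaining ceiling in the claimed bound, so Lemma \ref{lem4.2} finishes this case.

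In the second case $\lfloor\sqrt{(d+2)/2}\rfloor\lceil\sqrt{(d+2)/2}\rceil < (d+2)/2 \leq \lfloor\sqrt{(d+2)/2}\rfloor(\lceil\sqrt{(d+2)/2}\rceil+1)$, the same strategy applies with $m = \lceil\sqrt{(d+2)/2}\rceil + 1 + k_{4}$ and the analogously defined $k_{4}$; the first two terms now collapse to $\lfloor\sqrt{(d+2)/2}\rfloor + \lceil\sqrt{(d+2)/2}\rceil + 1$, and Lemma \ref{lem4.2} gives the stated cardinality.

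The only nonroutine step is the ceiling identity showing that the choice of $k_{3}$ (respectively $k_{4}$) really forces $\lceil (d+2)/(2m)\rceil$ to equal the claimed value, and this is a short two-sided check: the defining inequality gives the upper bound, while maximality of $k_{3}$ (respectively $k_{4}$) excludes the next smaller value. Once this ceiling computation is verified, the theorem follows from Lemma \ref{lem4.2} with no further work.
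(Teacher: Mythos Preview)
Your proposal is correct and follows exactly the same approach as the paper: apply Lemma~\ref{lem4.2} with $m=\lceil\sqrt{(d+2)/2}\rceil+k_{3}$ (respectively $m=\lceil\sqrt{(d+2)/2}\rceil+1+k_{4}$) and compute the resulting cardinality, in direct parallel with the proof of Theorem~\ref{th3.3}. In fact you supply more detail than the paper, which simply asserts the cardinality; your two-sided ceiling check is the right justification (the maximality of $k_{3}$ gives $(d+2)/2>(\lfloor\sqrt{(d+2)/2}\rfloor-k_{3}-1)(\lceil\sqrt{(d+2)/2}\rceil+k_{3}+1)$, which dominates the product you need).
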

\begin{proof}
If $\lfloor\sqrt{\frac{d+2}{2}}\rfloor^{2}\leq \frac{d+2}{2}\leq \lfloor\sqrt{\frac{d+2}{2}}\rfloor\lceil\sqrt{\frac{d+2}{2}}\rceil$,
by Lemma \ref{lem4.2}, the set $S^{d}(\lceil\sqrt{\frac{d+2}{2}}\rceil+k_{3})$
is 1-LOCC indistinguishable with
$|S^{d}(\lceil\sqrt{\frac{d+2}{2}}\rceil+k_{3})|=
\lfloor\sqrt{\frac{d+2}{2}}\rfloor+\lceil\sqrt{\frac{d+2}{2}}\rceil+\lceil\frac{\lceil\frac{d+1}{4}\rceil}{\lceil\sqrt{\frac{d+2}{2}}\rceil+k_{3}}\rceil.$
Similarly, when $\lfloor\sqrt{\frac{d+2}{2}}\rfloor\lceil\sqrt{\frac{d+2}{2}}\rceil< \frac{d+2}{2}\leq \lfloor\sqrt{\frac{d+2}{2}}\rfloor(\lceil\sqrt{\frac{d+2}{2}}\rceil+1)$,
the set $S^{d}(\lceil\sqrt{\frac{d+2}{2}}\rceil+1+k_{4})$ is 1-LOCC indistinguishable with
$|S^{d}(\lceil\sqrt{\frac{d+2}{2}}\rceil+1+k_{4})|= \lfloor\sqrt{\frac{d+2}{2}}\rfloor+\lceil\sqrt{\frac{d+2}{2}}\rceil+1+\lceil\frac{\lceil\frac{d+1}{4}\rceil}{\lceil\sqrt{\frac{d+2}{2}}\rceil+1+k_{4}}\rceil$.
\end{proof}
When $d\geq 26$, $|S^{d}|$ in Theorem \ref{th4.2} is not more than $|S^{d}|$ in Theorem \ref{th4.1}.
See Table \ref{tab4.2} for comparison of $|S^{d}|$ in \cite{wang2016qip} and Theorem \ref{th4.2}.

\begin{table*}
\caption{\label{tab4.1}Comparison of \cite[Theorem 1]{zhang2015pra} ($|S^{d}|=\frac{d+4}{2}$) and Theorem \ref{th4.1} ($|S^{d}|=3+\lceil\frac{d}{4}\rceil$).}
\footnotesize
\begin{tabular}{c|cccccccccccccccccccccccccc}
\br
$d$ &4&6&8&10&12&14&16&18&20&26$^{c}$&30&40&50&60&70&80&90&100\\
\hline
$|S^{d}|$ in \cite{zhang2015pra}&4&5&6&7&8&9&10&11&12&15&17&22&27&32&37&42&47&52 \\
Theorem \ref{th4.1}&4&5&5&6&6&7&7&8&8&10&11&13&16&18&20&23&26&28\\
\br
\end{tabular}
\end{table*}

\begin{table*}
\caption{\label{tab4.2}Comparison of \cite[Theorem 1]{wang2016qip} ($|S^{d}|=3\lceil \sqrt{d}\rceil-1$) and Theorem \ref{th4.2}
($|S^{d}|=\lfloor\sqrt{\frac{d+2}{2}}\rfloor+\lceil\sqrt{\frac{d+2}{2}}\rceil+
\lceil\frac{\lceil\frac{d+1}{4}\rceil}{\lceil\sqrt{\frac{d+2}{2}}\rceil+k_{3}}\rceil$,
$|S^{d}|=\lfloor\sqrt{\frac{d+2}{2}}\rfloor+\lceil\sqrt{\frac{d+2}{2}}\rceil+1+\lceil\frac{\lceil\frac{d+1}{4}\rceil}{\lceil\sqrt{\frac{d+2}{2}}\rceil+1+k_{4}}\rceil$).}
\footnotesize
\begin{tabular}{c|ccccccccccccccccccccccccc}
\br
$d$  &4&6&8&10&12&14&16&18&20&26$^{c}$&30&40&50&60&70&80&90&100\\
\hline
$|S^{d}|$ in \cite{wang2016qip}&&8&8&11&11&11&11&14&14&17&17&20&23&23&26&26&29&29\\
Theorem \ref{th4.2}&&5&6&6&7&7&7&8&9&10&10&12&13&14&15&16&17&18\\
\br
\end{tabular}
$^{c}$$|S^{d}|$ in Theorem \ref{th4.2} is not more than $|S^{d}|$ in Theorem \ref{th4.1} when $d\geq 26$.
\end{table*}

\section{Conclusion}

In summary, by using linear system and Vandermonde matrix,
we have extended known results on 1-LOCC indistinguishable set of GBSs in $d\otimes d$.
Based on the function $f(d)$ defined by Zhang et al.,
the function $f_{GBS}(d)$ which is the minimum cardinality of 1-LOCC indistinguishable set of GBSs in $d\otimes d$ is introduced.
The extended results imply that, when $d$ is odd, we have
$f_{GBS}(d)\leq\min\{\frac{d+3}{2},\lfloor\frac{d+1}{4}\rfloor+5,2\lceil \sqrt{d}\rceil+\lceil\frac{\lceil \frac{d-1}{4}\rceil}{\lceil \sqrt{d}\rceil}\rceil\};$
when $d$ is even, then
$f_{GBS}(d)\leq\min\{\lceil\frac{d}{4}\rceil+3,2\lceil \sqrt{\frac{d+2}{2}}\rceil+\lceil\frac{\lceil \frac{d+1}{4}\rceil}{\lceil \sqrt{\frac{d+2}{2}}\rceil}\rceil\}.$
In particular,  $f_{GBS}(7)=5$.
The extended results may lead to a better understanding of the nonlocality of maximally entangled states.
It is still an interesting open question whether we can find the exact value of  $f_{GBS}(d)$ for $d= 6$ and $d\geq 8$.

\begin{ack}
This work is supported by NSFC (Grant No. 11301155, 61601171),
Project of Science and Technology Department of Henan Province of China (172102210275, 182102210306),
Foundation of Doctor of Henan Polytechnic University (B2017-48).
\end{ack}

\Bibliography{<>}
%\begin{thebibliography}{}
\bibitem{benn1999pra}
C. H. Bennett, D. P. DiVincenzo, C. A. Fuchs, T. Mor, E. Rains, P. W. Shor, J. A. Smolin, and W. K. Wootters,  Phys. Rev. A \textbf{59}, 1070 (1999).
\bibitem{walg2000prl}
J. Walgate, A. J. Short, L. Hardy, and V. Vedral, Phys. Rev. Lett. \textbf{85}, 4972 (2000).
\bibitem{walg2002prl}
J. Walgate and L. Hardy, Phys. Rev. Lett. \textbf{89}, 147901 (2002).
\bibitem{gho2001prl}
S. Ghosh, G. Kar, A. Roy, A. S. Sen (De), and U. Sen, Phys. Rev. Lett. \textbf{87}, 277902 (2001).
\bibitem{horo2003prl}
M. Horodecki, A. Sen(De), U. Sen, and K. Horodecki, Phys. Rev. Lett. \textbf{90}, 047902 (2003).
\bibitem{fan2004prl}
H. Fan, Phys. Rev. Lett. \textbf{92}, 177905 (2004).
\bibitem{fan2007pra}
H. Fan, Phys. Rev. A \textbf{75}, 014305 (2007).
\bibitem{benn1999prl}
C. H. Bennett, D. P. DiVincenzo, T. Mor, P. W. Shor, J. A. Smolin and B. M. Terhal, Phys. Rev. Lett. \textbf{82}, 5385 (1999).
\bibitem{gho2004pra}
S. Ghosh, G. Kar, A. Roy, and D. Sarkar, Phys. Rev. A \textbf{70}, 022304 (2004).
\bibitem{nath2005jmp}
M. Nathanson, J. Math. Phys. \textbf{46}, 062103 (2005).
\bibitem{nath2013pra}
M. Nathanson, Phys. Rev. A \textbf{88}, 062316 (2013).
\bibitem{band2011njp}
S. Bandyopadhyay, S. Ghosh, and G. Kar, New J. Phys. \textbf{13}, 123013 (2011).
\bibitem{yu2012prl}
N. K. Yu, R. Y. Duan, and M. S. Ying, Phys. Rev. Lett. \textbf{109}, 020506 (2012).
\bibitem{zhang2015pra}
Z. C. Zhang, K. Q. Feng, F. Gao and Q. Y. Wen, Phys. Rev. A \textbf{91}, 012329 (2015).
\bibitem{wang2016qip}
Y. L. Wang, M. S. Li, Z. J. Zheng and S. M. Fei, Quant. Info. Proc. \textbf{15}, 1661 (2016).
\bibitem{wang2017qip}
Y. L. Wang, M. S. Li, S. M. Fei and Z. J. Zheng, Quant. Info. Proc. \textbf{16}, 126 (2017).
\bibitem{zhang2014qip}
Z. C. Zhang, Q. Y. Wen, F. Gao, G. J. Tian and T. Q. Cao, Quant. Info. Proc. \textbf{13}, 795 (2014).
\bibitem{tian2016pra}
G. J. Tian, S. X. Yu, F. Gao, Q. Y. Wen and C. H. Oh, Phys. Rev. A \textbf{94}, 052315 (2016).
\bibitem{sing2017pra}
T. Singal, R. Rahman, S. Ghosh, and G. Kar, Phys. Rev. A \textbf{96}, 042314 (2017).
\bibitem{band2002alg}
S. Bandyopadhyay, P. O. Boykin, V. Roychowdhury, and F. Vatan, Algorithmica \textbf{34}, 512 (2002).
%\end{thebibliography}
\endbib

\end{document}